\documentclass[conference]{IEEEtran}
\IEEEoverridecommandlockouts
\usepackage{cite}
\usepackage{amsmath,amssymb,amsfonts}

\usepackage{amsthm}
\usepackage{algorithm}
\usepackage{algorithmic}
\usepackage{graphicx}
\usepackage{textcomp}
\usepackage[table]{xcolor}
\usepackage{booktabs}
\usepackage{enumitem}
\usepackage[most]{tcolorbox}

\theoremstyle{plain}
\newtheorem{theorem}{Theorem}
\newtheorem{lemma}{Lemma}
\newtheorem{proposition}{Proposition}
\newtheorem{corollary}{Corollary}
\theoremstyle{definition}
\newtheorem{definition}{Definition}

\definecolor{headrow}{HTML}{ECECEC}

\definecolor{insln}{HTML}{B8860B}   
\definecolor{insbg}{HTML}{FBF4E2}   
\newtcolorbox{insightbox}{enhanced jigsaw, breakable, boxrule=0pt, boxsep=0pt,
  colback=insbg, sharp corners, left=8pt, right=6pt, top=4pt, bottom=4pt,
  borderline west={2.4pt}{0pt}{insln}, before skip=6pt, after skip=6pt}

\usepackage{dblfloatfix}

\definecolor{linknavy}{HTML}{1F4E79}
\usepackage{hyperref}
\hypersetup{colorlinks=true, linkcolor=linknavy, citecolor=linknavy,
  urlcolor=linknavy, breaklinks=true}

\usepackage{fancyhdr}
\fancypagestyle{firstpage}{%
  \fancyhf{}

}

\begin{document}

\title{Trust-Gated Capability Control: Breaking the Trust–Vulnerability Paradox in Multi-Agent LLM Systems}

\author{
\IEEEauthorblockN{
Md Mehedi Hasan Nipu\textsuperscript{1},
Chinmoy Mitra\textsuperscript{2},
Tarannum Ahmed Nowshin\textsuperscript{3}
Israt Moyeen Noumi\textsuperscript{4}
}
\IEEEauthorblockA{
\textsuperscript{1}Department of Electrical and Computer Engineering, North South University, Dhaka, Bangladesh\\
\textsuperscript{2}Department of Computer Science \& Engineering
Rajshahi University of Engineering \& Technology
Rajshahi, Bangladesh\\
\textsuperscript{3}Department of Computer Science and Engineering
BRAC University,Dhaka, Bangladesh\\
\textsuperscript{4}Dept of Computer Science and Engineering, Ahsanullah University of Science and Technology, Bangladesh\\
mehedi.nipu@northsouth.edu,\; rudrochinmoy@gmail.com,\; taranowshin@gmail.com,\; isratmoyeen.23@gmail.com
}
}

\maketitle
\thispagestyle{firstpage}

\begin{abstract}
Layered trust models for multi-agent LLM systems remain largely conceptual: they name which dimensions of trust matter but not how layers combine, how their importance is set at runtime, or how trust should govern agent actions. This gap matters because higher inter-agent trust raises task success while also enlarging exposure to exploitation, a tension formalized as the Trust-Vulnerability Paradox. We make a five-layer trust stack operational through three contributions. First, a cross-layer synergy operator propagates prerequisite-layer deficits into dependent layers, feeding a generalized-mean composite trust that recovers the weakest-link rule as a limiting case, with provable bounds. Second, per-layer importance weights are grounded in observed failures via a no-regret online estimator that tracks which layer is currently most responsible for harm. Third, Trust-Gated Capability Control issues short-lived, revocable capability grants only when composite trust and the relevant prerequisite layers clear capability-specific thresholds. We prove this mechanism breaks the paradox: a stealthy compromise that inflates behavioral trust while degrading a prerequisite layer cannot escalate privilege, and give a closed-form, provably conservative trust fixed point with a bounded-latency revocation guarantee. A numerical study with a sleeper adversary confirms the predictions: composite-trust bounds hold across 20,000 random draws, the analytic fixed point matches simulation within 0.007, and a compromised prerequisite layer triggers automatic revocation within a few interactions.
\end{abstract}

\begin{IEEEkeywords}
multi-agent systems, large language models, trust, zero-trust, capability control, agent safety, alignment
\end{IEEEkeywords}

\section{Introduction}
Large language model (LLM) agents are increasingly deployed as collaborating collectives that plan, negotiate, and act through shared tools and messages~\cite{wang2024survey,guo2024llmmultiagent,tran2025multiagent,hossain2026review}. Collaboration improves capability but makes \emph{trust} between agents a first-class operational variable rather than a background assumption: a compromised or hallucinating agent whose output is consumed without scrutiny can propagate errors system-wide, and structured attacks on inter-agent trust raise harmful-response rates well above single-agent baselines~\cite{qi2025amplified}. Recent work formalizes this as a \emph{Trust-Vulnerability Paradox} (TVP): higher inter-agent trust improves task success yet monotonically increases over-exposure and authorization drift~\cite{qi2025trust}, so trust is better treated as a revocable, continuously verified security variable governed by least-privilege principles~\cite{zerotrust2025,trism2026,nist800207}. Layered trust models are an appealing response, decomposing trustworthiness into dimensions such as epistemic reliability, behavioral alignment, role awareness, social accountability, and institutional governance. Yet existing accounts are largely descriptive, leaving three questions open: (i)~\emph{Composition} -- how do layer-wise trust values combine into one actionable score when layers are not independent? (ii)~\emph{Weighting} -- how is each layer's relative importance set, and can it adapt to which layer is failing? (iii)~\emph{Actuation} -- how should the resulting trust govern what an agent is permitted to do? Without answers, a layered model cannot be deployed at runtime.

This paper closes these gaps. We treat the layered stack as a dynamical object and make it \emph{operational}. \textbf{Contributions:}
\begin{itemize}[leftmargin=1.1em,topsep=2pt,itemsep=2pt]
\item \textbf{A cross-layer synergy operator} (Sec.~\ref{sec:model}) propagating prerequisite-layer deficits into dependent layers, feeding a generalized-mean composite trust $\Phi_p$ whose limit $p\!\to\!-\infty$ recovers the weakest-link rule (Theorem~\ref{thm:bounds}, Proposition~\ref{prop:cascade}), plus a closed-form conservative trust fixed point (Proposition~\ref{prop:fixedpoint}).
\item \textbf{Failure-grounded adaptive weights} (Sec.~\ref{sec:weights}): layer importances $\alpha_\ell(t)$ learned online from layer-attributed failures with a no-regret guarantee (Lemma~\ref{lem:regret}).
\item \textbf{Trust-Gated Capability Control} (Sec.~\ref{sec:tgcc}): a runtime algorithm mapping composite trust to revocable capability grants, provably breaking the TVP (Theorem~\ref{thm:paradox}) with bounded revocation latency (Corollary~\ref{cor:revoke}).
\item \textbf{A numerical study} (Sec.~\ref{sec:eval}) confirming the bounds, fixed point, and self-revocation under a sleeper adversary.
\end{itemize}
We also map the framework onto deployed agent protocols (MCP, A2A) and sketch cross-modal and economic-incentive extensions (Sec.~\ref{sec:discussion}).

\section{Related Work}
\textbf{Trust as a security variable.} Qi et al.~\cite{qi2025trust} formalize the Trust--Vulnerability Paradox around three facets of inter-agent trust -- \emph{strength} (how readily claims are accepted), \emph{scope} (what is thereby authorized), and \emph{revocability} (how fast permissions can be withdrawn) -- and show empirically that raising a scalar trust level $\tau$ monotonically increases over-exposure and authorization drift across several agent frameworks. Their defenses (sensitive-information repartitioning, a guardian agent policing disclosures) reduce exposure but operate \emph{around} the trust signal rather than reshaping how it is computed. Zero-trust treatments for agentic AI~\cite{zerotrust2025,nist800207} recast trust as never-implicit and continuously re-verified, with verifiable identity, least-privilege scopes, and per-call revocation; TRiSM~\cite{trism2026} organizes the broader landscape. These works establish \emph{that} trust must be revocable, but leave open \emph{how} a multi-dimensional signal should be aggregated into authorization -- precisely the gap our compositional operator and gating rule fill.

\textbf{Dynamic and probabilistic trust.} Classical reputation systems summarize an agent's past behavior into a scalar score~\cite{ramchurn2004trust}, often via Beta or Dirichlet posteriors whose conjugacy makes updates cheap~\cite{josang2002beta}. Recent LLM-era work revives this: DynaTrust~\cite{dynatrust2026} maintains per-agent Beta beliefs with failure-weighted smoothing to expose \emph{sleeper} agents, AgentGuard~\cite{agentguard2025} learns an online MDP model of emergent behavior for runtime assurance, and economic-game probes elicit LLMs' own trustworthiness priors~\cite{econgames2026}. All track a \emph{single} trust quantity and are therefore blind to the heart of the paradox: an agent that is behaviorally cooperative yet epistemically broken. We retain the Beta machinery, and prove its conservative steady state, but lift it to a \emph{vector} of layer trusts coupled by a synergy operator, letting a hidden prerequisite failure veto authorization.

\textbf{Multi-agent debate, deception, and collective safety.} Collaboration among LLM agents can both improve and degrade safety: structured debate raises answer quality and factuality~\cite{du2023debate}, yet the same channels are exploitable -- jailbreaks injected into one debater propagate through inter-agent messages and lift harmful-response rates well above single-agent baselines~\cite{qi2025amplified}. Simulations of deception and betrayal further show agents readily form, exploit, and misjudge trust in social settings~\cite{traitors2025}. Such findings argue for treating a compromised collaborator as the default threat (the stance our threat model adopts), not an edge case, and for authorization that does not amplify one corrupted agent's influence across the collective.

\textbf{Uncertainty, calibration, and epistemic reliability.} Any trust model's epistemic layer rests on knowing when an agent is wrong. Bayesian and ensemble methods quantify predictive uncertainty~\cite{gal2016dropout}, but LLMs are often \emph{miscalibrated}, reporting high confidence on wrong outputs; recent work studies how to build and propagate well-grounded epistemic state across agents~\cite{epistemic2026}. Our framework consumes such signals through the epistemic layer's evidence $s_1$ and, via the synergy operator, lets a calibration collapse cascade into reduced authorization -- turning a passive measurement into an active control input.


\textbf{Agent protocols.} Interoperability standards such as MCP~\cite{mcp2024}, A2A~\cite{a2a2025}, and the broader protocol landscape~\cite{interop2025} expose capabilities as scoped, token-mediated grants and increasingly issue short-lived credentials in the spirit of least privilege, standardizing the \emph{mechanism} of capability exchange while staying silent on the \emph{policy} that decides which scopes to issue. TGCC supplies exactly this missing policy layer (Sec.~\ref{sec:discussion}). Table~\ref{tab:positioning} positions our approach against these lines of work: prior methods are dynamic, revocable, or partially layered, but none combines cross-layer composition, failure-adaptive weighting, capability gating, and formal guarantees.

\begin{table}[t]
\caption{Positioning against prior approaches. \emph{Comp.}: cross-layer composition; \emph{Adapt.}: failure-adaptive weights; \emph{Gate}: capability gating/revocation; \emph{Guar.}: proven guarantees.}
\label{tab:positioning}
\centering
\footnotesize
\renewcommand{\arraystretch}{1.2}
\begin{tabular}{p{2.6cm}cccc}
\toprule
\rowcolor{headrow}
\textbf{Approach} & \textbf{Comp.} & \textbf{Adapt.} & \textbf{Gate} & \textbf{Guar.} \\
\midrule
Reputation~\cite{ramchurn2004trust,josang2002beta} & $\times$ & $\times$ & $\times$ & $\times$ \\
TVP defenses~\cite{qi2025trust} & $\times$ & $\times$ & partial & $\times$ \\
DynaTrust~\cite{dynatrust2026} & $\times$ & $\times$ & $\times$ & $\times$ \\
Zero-trust~\cite{zerotrust2025,nist800207} & $\times$ & $\times$ & \checkmark & $\times$ \\
\textbf{TGCC (ours)} & \checkmark & \checkmark & \checkmark & \checkmark \\
\bottomrule
\end{tabular}
\end{table}

\section{The Operational Trust Stack}
\label{sec:model}

\begin{insightbox}
\textit{\textbf{Core idea.} Maintain a \emph{vector} of per-layer trusts, couple them so a prerequisite deficit suppresses dependent layers, and gate every capability on the resulting composite trust -- authorization then tracks the weakest \emph{verified} layer, not the raw trust an adversary can inflate.}
\end{insightbox}

\subsection{Threat model and assumptions}
\textbf{Setting.} $N$ agents collaborate, each consuming others' outputs and requesting capabilities (tool calls, write actions, message scopes) mediated by a controller that decides grants. \textbf{Adversary.} A compromised \emph{sleeper} agent first behaves cooperatively to accrue trust, then acts maliciously once granted high-risk capabilities, keeping its \emph{behavioral} conduct agreeable while a prerequisite layer -- typically \emph{epistemic} reliability -- collapses into confident hallucination. \textbf{Assumptions.} (A1)~each layer $\ell$ exposes a check yielding a noisy but informative success signal $s_\ell$ (calibration probe, role-conformance test, audit trail, etc.); (A2)~the layer dependency structure is known, encoded by a fixed lower-triangular coupling matrix $\mathbf{C}$, with epistemic reliability a prerequisite of all higher layers; (A3)~per-capability risk thresholds are set by the deployer. \textbf{Goal.} Prevent privilege escalation by a stealthily compromised agent while preserving capability for honest agents: authorization should follow verified cross-layer trust, not raw behavioral agreement.

\subsection{Per-layer trust as a Beta belief}
Consider $L=5$ ordered trust layers indexed $\ell\in\{1,\dots,L\}$: epistemic ($1$), behavioral ($2$), role ($3$), social ($4$), institutional ($5$). For agent $i$ at step $t$ we maintain a trust vector $\mathbf{T}_i(t)=[T_{i,1}(t),\dots,T_{i,L}(t)]^\top\in[0,1]^L$ (we drop the index $i$ where a single agent is meant). Each component is the mean of a Beta posterior accumulating per-layer evidence with a forgetting factor $\gamma\in(0,1]$ and an asymmetric penalty $\omega\!\ge\!1$ on failures:
\begin{equation}
a_\ell \leftarrow \gamma a_\ell + \mathbb{1}[s_\ell],\qquad
b_\ell \leftarrow \gamma b_\ell + \omega\,\mathbb{1}[\neg s_\ell],
\label{eq:beta}
\end{equation}
\begin{equation}
T_{i,\ell}(t) = \frac{a_\ell}{a_\ell+b_\ell},
\label{eq:betamean}
\end{equation}
where $s_\ell$ is the success indicator of a layer-$\ell$ check (calibration probe, utility-divergence test, role-conformance check, audit, etc.). The forgetting factor keeps trust responsive to behavioral shifts; $\omega>1$ makes trust fall faster than it rises, as is standard for safety-critical reputation~\cite{josang2002beta,dynatrust2026}. The estimator admits a closed-form, provably conservative steady state.

\begin{proposition}[Conservative trust fixed point]
\label{prop:fixedpoint}
Under stationary layer reliability $\Pr[s_\ell]=\rho_\ell$, the expected-evidence recursion induced by~\eqref{eq:beta} with forgetting $\gamma\in(0,1)$ and penalty $\omega\ge1$ has the unique fixed point
\begin{equation}
T_\ell^\star=\frac{\rho_\ell}{\rho_\ell+\omega(1-\rho_\ell)},
\label{eq:fixedpoint}
\end{equation}
which satisfies $T_\ell^\star\le\rho_\ell$ with equality iff $\omega=1$, is strictly increasing in $\rho_\ell$, and is reached geometrically at rate $\gamma$ (independent of its value on $\gamma$).
\end{proposition}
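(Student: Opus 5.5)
We pass to expected evidence. Write $\bar a_\ell(t)=\mathbb{E}[a_\ell(t)]$ and $\bar b_\ell(t)=\mathbb{E}[b_\ell(t)]$. Taking expectations in~\eqref{eq:beta} under $\Pr[s_\ell]=\rho_\ell$ gives two decoupled affine recursions:
\begin{equation*}
\bar a_\ell(t+1)=\gamma\,\bar a_\ell(t)+\rho_\ell,\qquad
\bar b_\ell(t+1)=\gamma\,\bar b_\ell(t)+\omega(1-\rho_\ell).
\end{equation*}
Each is a contraction with modulus $\gamma<1$. By the Banach fixed-point theorem, or simply by unrolling, each has a unique fixed point: $\bar a^\star=\rho_\ell/(1-\gamma)$ and $\bar b^\star=\omega(1-\rho_\ell)/(1-\gamma)$. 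The errors satisfy
\begin{equation*}
\bar a_\ell(t)-\bar a^\star=\gamma^t\bigl(\bar a_\ell(0)-\bar a^\star\bigr),
\end{equation*}
and the same holds for $\bar b$. We then define the trust associated with the expected evidence as $\bar a/(\bar a+\bar b)$, which is the "expected-evidence recursion" the statement refers to. At the fixed point the common factor $1/(1-\gamma)$ cancels, which yields~\eqref{eq:fixedpoint} and shows $T_\ell^\star$ does not depend on $\gamma$. Uniqueness of $T_\ell^\star$ follows from uniqueness of $(\bar a^\star,\bar b^\star)$.

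For the order properties we compute directly. First,
\begin{equation*}
\rho_\ell-T_\ell^\star=\frac{\rho_\ell(\omega-1)(1-\rho_\ell)}{\rho_\ell+\omega(1-\rho_\ell)}\ge 0.
\end{equation*}
Equality holds iff $\omega=1$, provided $\rho_\ell\in(0,1)$. At $\rho_\ell\in\{0,1\}$ equality holds trivially, so the "iff" needs the interior assumption, and we state it. Second, writing $T_\ell^\star=1/\bigl(1+\omega(1-\rho)/\rho\bigr)$ gives $dT^\star/d\rho=\omega/(\rho+\omega(1-\rho))^2>0$, so $T_\ell^\star$ is strictly increasing in $\rho_\ell$.

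For the geometric rate we show the ratio inherits rate $\gamma$. Set $D(t)=\bar a(t)+\bar b(t)$. Then $D(t)\to D^\star=(\rho+\omega(1-\rho))/(1-\gamma)>0$ geometrically, and
\begin{equation*}
\frac{\bar a}{D}-\frac{\bar a^\star}{D^\star}=\frac{(\bar a-\bar a^\star)D^\star-\bar a^\star(D-D^\star)}{D\,D^\star}.
\end{equation*}
The numerator is $O(\gamma^t)$. The denominator is bounded below, since $D(t)\ge\min\{D(0),D^\star\}>0$ by monotone interpolation: $D(t)=\gamma^tD(0)+(1-\gamma^t)D^\star$. Hence $|T_\ell(t)-T_\ell^\star|\le K\gamma^t$ for an explicit constant $K$.

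The main subtlety is interpretive rather than technical. $\mathbb{E}[a/(a+b)]\neq\mathbb{E}a/(\mathbb{E}a+\mathbb{E}b)$, so the proposition must be read as a statement about the mean-field (expected-evidence) trajectory. This is exactly what the hypothesis says, and we would make that reading explicit. If a stronger claim were wanted, one could add that as $\gamma\to1$ the effective sample size $1/(1-\gamma)$ grows and concentration makes the random ratio approach $T_\ell^\star$, but we would not need this.
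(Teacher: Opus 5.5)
Your proposal is correct and follows the paper's proof essentially step for step: expectations of~\eqref{eq:beta} give two decoupled affine recursions that are contractive at rate $\gamma$, the ratio of their fixed points yields~\eqref{eq:fixedpoint} with the factor $1-\gamma$ cancelling, and the derivative $\omega/(\rho_\ell+\omega(1-\rho_\ell))^2$ gives strict monotonicity. You also add two details the paper only asserts: you explicitly transfer the $\gamma^t$ rate from $(\bar a,\bar b)$ to the ratio $\bar a/(\bar a+\bar b)$, and you correctly note that ``equality iff $\omega=1$'' requires $\rho_\ell\in(0,1)$.
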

\begin{proof}
Taking expectations of~\eqref{eq:beta} gives two linear recursions, contractive since $\gamma<1$:
\begin{equation*}
\bar a_{t+1}=\gamma\bar a_t+\rho_\ell,\qquad
\bar b_{t+1}=\gamma\bar b_t+\omega(1-\rho_\ell),
\end{equation*}
with fixed points $\bar a^\star=\rho_\ell/(1-\gamma)$ and $\bar b^\star=\omega(1-\rho_\ell)/(1-\gamma)$. Hence
\begin{equation*}
T_\ell^\star=\frac{\bar a^\star}{\bar a^\star+\bar b^\star}
=\frac{\rho_\ell}{\rho_\ell+\omega(1-\rho_\ell)} ,
\end{equation*}
the factor $1-\gamma$ cancelling. Strict monotonicity follows from $\partial T_\ell^\star/\partial\rho_\ell=\omega\,[\rho_\ell+\omega(1-\rho_\ell)]^{-2}>0$, and
\begin{equation*}
T_\ell^\star\le\rho_\ell \iff (\omega-1)(1-\rho_\ell)\ge0,
\end{equation*}
which holds for $\omega\ge1$ (equality iff $\omega=1$). Contractivity at rate $\gamma$ gives the stated geometric convergence.
\end{proof}

For example, $(\rho,\omega)=(0.9,3)$ gives $T^\star=0.75$, while $\rho=0.3$ gives $T^\star\approx0.13$: a reliability collapse is amplified into a sharp trust drop, exploited by the controller in Sec.~\ref{sec:tgcc}.

\subsection{Cross-layer synergy operator}
The layers are not independent: a deficit in a prerequisite layer should suppress the usable trust of every layer that depends on it. We encode dependencies by a lower-triangular \emph{coupling matrix} $\mathbf{C}\in[0,1]^{L\times L}$, where $C_{\ell k}$ (for $k<\ell$) is the degree to which layer $\ell$ is conditioned on prerequisite $k$.

\begin{definition}[Synergy operator]
The \emph{effective trust} of layer $\ell$ is
\begin{equation}
\tilde T_\ell \;=\; T_\ell \prod_{k<\ell}\bigl(1 - C_{\ell k}(1-T_k)\bigr).
\label{eq:synergy}
\end{equation}
\end{definition}
A prerequisite deficit $(1-T_k)$ discounts $\tilde T_\ell$ by a factor that scales with the coupling $C_{\ell k}$: when $C_{\ell k}=0$ there is no dependency, and when $C_{\ell k}=1$ a fully failed prerequisite ($T_k\!=\!0$) zeroes the contribution. Equation~\eqref{eq:synergy} is a precise instantiation of the otherwise informal notion that ``failure at a lower layer propagates upward.''

\subsection{Composite trust}
We aggregate the effective trusts into a single score via a weighted generalized (power) mean of order $p<0$ with adaptive weights $\alpha_\ell\ge0$, $\sum_\ell\alpha_\ell=1$:
\begin{equation}
\Phi_p(\mathbf{T};\boldsymbol{\alpha}) = \Bigl(\sum_{\ell=1}^{L}\alpha_\ell\,\tilde T_\ell^{p}\Bigr)^{1/p}.
\label{eq:phi}
\end{equation}
Negative $p$ makes $\Phi_p$ sensitive to the weakest layers; $p\to-\infty$ recovers the conservative weakest-link rule $\min_\ell \tilde T_\ell$, while moderate $|p|$ permits limited compensation between healthy layers. The next results make this precise. We first verify the construction is well-posed.

\begin{lemma}[Well-posedness]
\label{lem:wellposed}
For any $\mathbf{T}\in[0,1]^L$ and lower-triangular $\mathbf{C}\in[0,1]^{L\times L}$, the effective trusts satisfy $\tilde T_\ell\in[0,1]$, and for any weights $\boldsymbol{\alpha}$ on the simplex and $p<0$, $\Phi_p\in[\min_\ell\tilde T_\ell,\max_\ell\tilde T_\ell]\subseteq[0,1]$. Hence the stack is a well-defined map $[0,1]^L\to[0,1]$.
\end{lemma}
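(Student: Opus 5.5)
The plan has two parts: first bound the effective trusts, then use the internality of power means.

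\textbf{Effective trusts.} For each factor in~\eqref{eq:synergy} I use that $C_{\ell k}\in[0,1]$ and $1-T_k\in[0,1]$. Hence $C_{\ell k}(1-T_k)\in[0,1]$, and every factor $1-C_{\ell k}(1-T_k)$ lies in $[0,1]$. A finite product of numbers in $[0,1]$ stays in $[0,1]$. Multiplying by $T_\ell\in[0,1]$ gives $\tilde T_\ell\in[0,1]$. In addition, $\tilde T_\ell\le T_\ell$, which will be useful later. The lower-triangular structure only fixes which indices appear; the case $\ell=1$ is the empty product, so $\tilde T_1=T_1$.

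\textbf{Composite trust.} Write $m=\min_\ell\tilde T_\ell$ and $M=\max_\ell\tilde T_\ell$. The function $x\mapsto x^p$ with $p<0$ is strictly decreasing on $(0,\infty)$. So for every $\ell$ with $\tilde T_\ell>0$ we have $M^p\le\tilde T_\ell^p\le m^p$. Averaging with simplex weights gives $M^p\le\sum_\ell\alpha_\ell\tilde T_\ell^p\le m^p$. The map $y\mapsto y^{1/p}$ is also decreasing, and applying it reverses the inequalities once more, giving $m\le\Phi_p\le M$. Combined with the first part, $[m,M]\subseteq[0,1]$.

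\textbf{Boundary cases (the main obstacle).} Negative powers are undefined at zero, and the synergy operator can produce $\tilde T_\ell=0$ exactly. I will adopt the standard extended-real convention: $0^p=+\infty$ for $p<0$, and $(+\infty)^{1/p}=0$. Zero-weight terms are dropped, taking $0\cdot\infty=0$, and the min and max are taken over layers with $\alpha_\ell>0$. Equivalently, $\Phi_p$ can be defined by continuity as the limit $\epsilon\downarrow0$ of $\Phi_p$ evaluated at $\tilde T_\ell\vee\epsilon$. Under this convention:
\begin{itemize}
\item If some layer with positive weight has $\tilde T_\ell=0$, the sum is $+\infty$ and $\Phi_p=0=m$, which lies in the interval.
\item Otherwise every term with positive weight is finite and positive, and the monotonicity argument above applies verbatim.
\end{itemize}
Finally, each $\tilde T_\ell$ is a polynomial in $\mathbf{T}$, and $\Phi_p$ is a single value in $[0,1]$ for every input. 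Hence the whole stack is a well-defined map $[0,1]^L\to[0,1]$.
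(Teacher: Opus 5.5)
Your proof is correct and follows the same route as the paper: you bound each synergy factor in $[0,1]$, then use the fact that a weighted power mean lies between the minimum and maximum of its arguments. Your explicit handling of $\tilde T_\ell=0$, where $\tilde T_\ell^{p}$ is undefined for $p<0$, via the extended-real (equivalently, continuity) convention addresses a degenerate case that the paper's proof passes over silently.
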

\begin{proof}
Each factor $1-C_{\ell k}(1-T_k)\in[0,1]$ because $C_{\ell k},T_k\in[0,1]$; a finite product of $[0,1]$ terms times $T_\ell\in[0,1]$ stays in $[0,1]$, so $\tilde T_\ell\in[0,1]$. A weighted power mean of arguments in $[0,1]$ lies between their minimum and maximum, both in $[0,1]$.
\end{proof}

\begin{theorem}[Bounds and weakest-link limit]
\label{thm:bounds}
For any $p<0$, $\Phi_p$ is non-decreasing in each $\tilde T_\ell$ and satisfies
\begin{equation}
\min_\ell \tilde T_\ell \le \Phi_p(\mathbf{T};\boldsymbol{\alpha}) \le \sum_\ell \alpha_\ell \tilde T_\ell,
\label{eq:thm1bounds}
\end{equation}
with $\lim_{p\to-\infty}\Phi_p=\min_\ell \tilde T_\ell$. Moreover,
\begin{equation}
\Phi_p(\mathbf{T};\boldsymbol{\alpha}) \le \kappa\,\min_\ell \tilde T_\ell,
\qquad \kappa := \bigl(\min_\ell \alpha_\ell\bigr)^{1/p} \ge 1.
\label{eq:thm1kappa}
\end{equation}
\end{theorem}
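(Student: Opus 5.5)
The argument rests on one reduction: write $x_\ell:=\tilde T_\ell\in[0,1]$, which is legitimate by Lemma~\ref{lem:wellposed}. After that, every claim is a standard fact about weighted power means $M_p(x)=(\sum_\ell\alpha_\ell x_\ell^p)^{1/p}$ with $p<0$.

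I first assume all $x_\ell>0$ and treat the boundary case $\min_\ell x_\ell=0$ at the end. With positive entries I adopt the convention $0^p=+\infty$, so that $\Phi_p=0$ whenever some $x_\ell=0$ with $\alpha_\ell>0$. This also means $\Phi_p$ carries no information about layers with $\alpha_\ell=0$; in the bounds below I read $\min_\ell$ over the support of $\boldsymbol\alpha$, and I note that $\kappa$ is finite only when every $\alpha_\ell>0$.

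\emph{Monotonicity.} For $p<0$ the map $x_\ell\mapsto x_\ell^p$ is decreasing. Hence $S:=\sum_\ell\alpha_\ell x_\ell^p$ is non-increasing in each $x_\ell$. The map $S\mapsto S^{1/p}$ is also decreasing, so the composition $\Phi_p$ is non-decreasing in each $x_\ell$.

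\emph{Lower bound.} Let $m=\min_\ell x_\ell$. Then $x_\ell^p\le m^p$ for every $\ell$, so $S\le m^p$. Raising both sides to the power $1/p<0$ reverses the inequality and gives $\Phi_p\ge m$.

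\emph{Upper bound.} This is the power-mean inequality $M_p\le M_1$. I would prove it by Jensen's inequality applied to the convex function $u\mapsto u^{p}$ on $(0,\infty)$, which is convex because $p(p-1)>0$. Jensen gives $\sum_\ell\alpha_\ell x_\ell^p\ge(\sum_\ell\alpha_\ell x_\ell)^p$. Raising to $1/p<0$ reverses this, yielding $\Phi_p\le\sum_\ell\alpha_\ell x_\ell$.

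\emph{The $\kappa$ bound.} Let $\ell^\ast$ be an index attaining the minimum. Then $S\ge\alpha_{\ell^\ast}m^p\ge(\min_\ell\alpha_\ell)\,m^p$. Raising to $1/p$ reverses the inequality, so $\Phi_p\le(\min_\ell\alpha_\ell)^{1/p}m=\kappa m$. Since $\min_\ell\alpha_\ell\le1/L\le1$ and $1/p<0$, we get $\kappa\ge1$.

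\emph{Weakest-link limit.} This follows by squeezing with the bound just proved. The sharper form $m\le\Phi_p\le(\alpha_{\ell^\ast})^{1/p}m$ holds with $\alpha_{\ell^\ast}>0$. As $p\to-\infty$ we have $1/p\to0$, so $(\alpha_{\ell^\ast})^{1/p}\to1$ and $\Phi_p\to m$.

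\emph{Boundary case.} If some $x_\ell=0$ with $\alpha_\ell>0$, the convention gives $\Phi_p=0=m$. All the inequalities and the limit then hold trivially, because the right-hand sides are nonnegative.

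The only real obstacle is this boundary bookkeeping and the sign reversals under $S\mapsto S^{1/p}$. The substantive step is the Jensen upper bound, and it is routine.
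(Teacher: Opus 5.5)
Your proof is correct and matches the paper's: the $\kappa$ bound is obtained exactly as there (keep only the minimizing term $\alpha_{\ell^\ast}m^p$ and apply the decreasing map $S\mapsto S^{1/p}$), and where the paper merely cites the ordering of power means in $p$, you prove the lower bound directly, the upper bound via Jensen, and the limit by squeezing $m\le\Phi_p\le\alpha_{\ell^\ast}^{1/p}m$, which is more self-contained and also gives an $O(1/|p|)$ rate. Your caveat that the limit is really the minimum over the support of $\boldsymbol\alpha$, and that $\kappa$ is finite only when every $\alpha_\ell>0$, is a fair refinement the paper glosses over, though it is harmless for the strictly positive Hedge weights the paper actually uses.
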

\begin{proof}
Bounds~\eqref{eq:thm1bounds} and the limit are the standard ordering of weighted power means: the mean of order $p$ is non-decreasing in $p$, equals the arithmetic mean at $p=1$ and the minimum at $p\to-\infty$; monotonicity of $\Phi_p$ in each $\tilde T_\ell$ follows since each summand $\tilde T_\ell^{p}$ is monotone. For~\eqref{eq:thm1kappa}, let $j\in\arg\min_\ell\tilde T_\ell$. As $p<0$ every term is positive, so
\begin{equation*}
\sum_\ell\alpha_\ell\tilde T_\ell^{p}\ge\alpha_j\,\tilde T_j^{p}.
\end{equation*}
The map $x\mapsto x^{1/p}$ is decreasing for $p<0$; applying it and using $\alpha_j\ge\min_\ell\alpha_\ell$,
\begin{equation*}
\Phi_p\le\alpha_j^{1/p}\,\tilde T_j\le\bigl(\min_\ell\alpha_\ell\bigr)^{1/p}\min_\ell\tilde T_\ell=\kappa\min_\ell\tilde T_\ell. \qedhere
\end{equation*}
\end{proof}

Equation~\eqref{eq:thm1kappa} is the key actuation lever: composite trust is pinned to the weakest effective layer up to a bounded slack $\kappa$. We next bound how a compromised prerequisite caps every dependent layer.

\begin{proposition}[Cascade containment]
\label{prop:cascade}
Suppose a prerequisite layer $k$ is compromised, $T_k\le\delta$. Then every dependent layer $\ell>k$ with $C_{\ell k}\ge c$ obeys $\tilde T_\ell \le 1-c(1-\delta)$. In particular, if the epistemic layer ($k=1$) is compromised and $C_{\ell 1}\ge c$ for all $\ell$, then $\Phi_p \le \kappa\,\delta$.
\end{proposition}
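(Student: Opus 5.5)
The plan is to read both claims directly off the product structure of the synergy operator~\eqref{eq:synergy} and then apply the slack bound~\eqref{eq:thm1kappa} of Theorem~\ref{thm:bounds}. No new machinery is needed beyond the fact, recorded in the proof of Lemma~\ref{lem:wellposed}, that every factor $1-C_{\ell m}(1-T_m)$ lies in $[0,1]$.

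\textbf{Step 1 (single-layer cap).} Fix a dependent layer $\ell>k$ with $C_{\ell k}\ge c$. In~\eqref{eq:synergy}, bound $T_\ell\le1$ and bound every factor with $m\neq k$ by $1$. This is legitimate because all factors lie in $[0,1]$, so dropping them can only increase the product. That leaves
\begin{equation*}
\tilde T_\ell\le 1-C_{\ell k}(1-T_k).
\end{equation*}
Since $T_k\le\delta$, we have $1-T_k\ge 1-\delta\ge0$. Since $C_{\ell k}\ge c\ge0$, the product $C_{\ell k}(1-T_k)$ is at least $c(1-\delta)$. Subtracting from $1$ reverses the inequality and gives $\tilde T_\ell\le 1-c(1-\delta)$. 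The only care needed is checking the signs, so that each monotone substitution goes the right way.

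\textbf{Step 2 (composite cap for epistemic compromise).} For $k=1$, the product in~\eqref{eq:synergy} is empty when $\ell=1$, so $\tilde T_1=T_1\le\delta$. Hence $\min_\ell\tilde T_\ell\le\delta$, and~\eqref{eq:thm1kappa} yields
\begin{equation*}
\Phi_p\le\kappa\min_\ell\tilde T_\ell\le\kappa\,\delta.
\end{equation*}
Step 1 additionally shows that every dependent layer is capped at $1-c(1-\delta)$. So the minimum cannot be driven back up by the adversary inflating the other layers. For example, a high behavioral $T_2$ is still multiplied down by the epistemic factor.

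\textbf{Expected obstacle.} The main subtlety is interpretive rather than technical. The stated bound $\kappa\delta$ already follows from the epistemic layer itself through $\tilde T_1=T_1$, so the coupling hypothesis $C_{\ell1}\ge c$ is not needed for that inequality. Its role is to explain why the gap between weakest and healthy layers shrinks under compromise. I would therefore present Step 2 using $\tilde T_1$ and remark that Step 1 is what drives every dependent $\tilde T_\ell$ toward $1-c$ as $\delta\to0$.

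For a bound on $\Phi_p$ that genuinely uses the coupling, I would first try the power-mean upper bound from~\eqref{eq:thm1bounds} with weights $\alpha_\ell$. This gives the additional estimate
\begin{equation*}
\Phi_p\le\alpha_1\delta+(1-\alpha_1)\bigl(1-c(1-\delta)\bigr),
\end{equation*}
which holds regardless of $\kappa$.
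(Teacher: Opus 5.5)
Your proof is correct and is essentially the paper's argument: bound the $k$-th factor of~\eqref{eq:synergy} by $1-c(1-\delta)$ and every other factor and $T_\ell$ by $1$, then use $\tilde T_1=T_1\le\delta$ with~\eqref{eq:thm1kappa}. You are also right that the coupling hypothesis plays no role in the $\kappa\delta$ bound; the paper's proof does not use it either. Your extra estimate $\Phi_p\le\alpha_1\delta+(1-\alpha_1)\bigl(1-c(1-\delta)\bigr)$, obtained from the arithmetic-mean bound in~\eqref{eq:thm1bounds}, is also valid. Unlike $\kappa\delta$, it never exceeds $1$, so it stays informative when $\min_\ell\alpha_\ell$ is small.
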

\begin{proof}
In~\eqref{eq:synergy} the factor for prerequisite $k$ is $1-C_{\ell k}(1-T_k)\le 1-c(1-\delta)$; all other factors and $T_\ell$ are $\le 1$, giving the per-layer bound. If $k=1$ then $\tilde T_1=T_1\le\delta$, so $\min_\ell\tilde T_\ell\le\delta$ and~\eqref{eq:thm1kappa} gives $\Phi_p\le\kappa\delta$.
\end{proof}

\section{Failure-Grounded Adaptive Weighting}
\label{sec:weights}
\begin{figure*}[t]
\centering\includegraphics[width=0.92\textwidth]{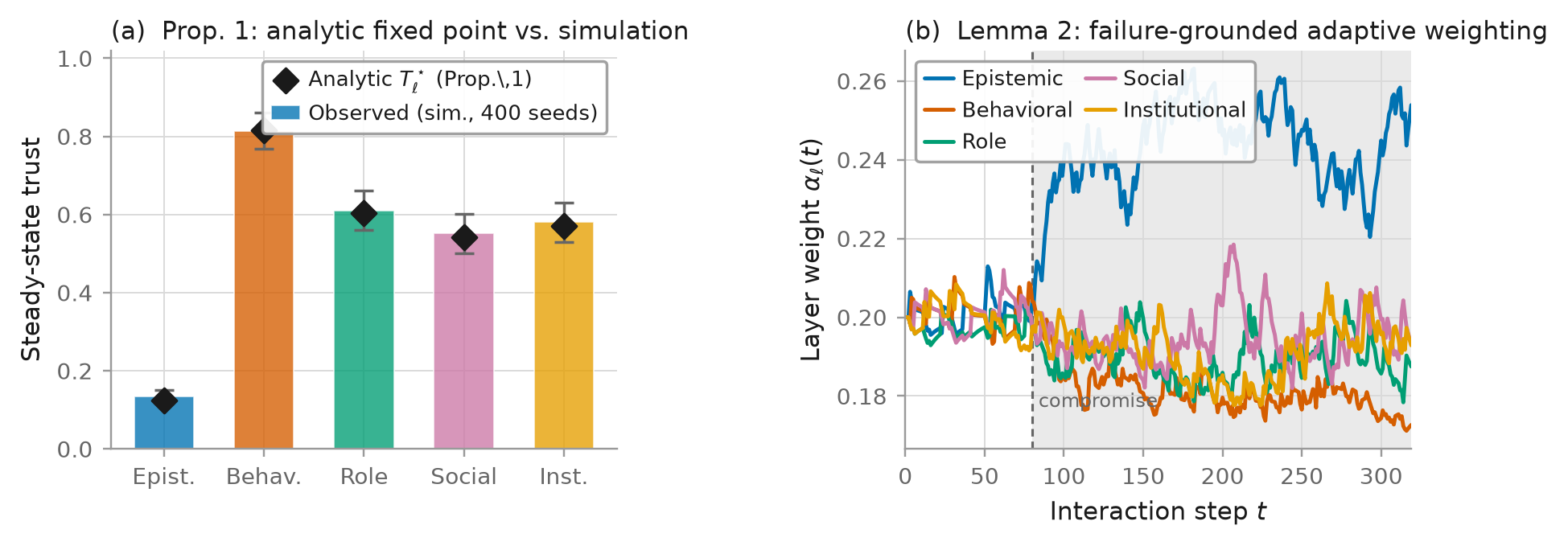}
\caption{(a) Steady-state trust (bars) vs.\ analytic fixed point $T^\star$ (diamonds), MAE $0.007$ (Proposition~\ref{prop:fixedpoint}). (b) Failure-grounded weights $\alpha_\ell(t)$ shift toward the epistemic layer after compromise (Lemma~\ref{lem:regret}).}
\label{fig:eval}
\end{figure*}
The weights $\boldsymbol{\alpha}(t)$ decide which layers the composite score should attend to. We argue they should track \emph{which layer is currently most responsible for harm}, information available from runtime failure attribution. Let $q_\ell(t)\in[0,1]$ be a layer-attributed loss (e.g., the fraction of harmful episodes in a window whose root cause is layer $\ell$). We maintain an exponentially weighted risk estimate and a multiplicative-weights (Hedge) update:
\begin{align}
\hat r_\ell(t) &= (1-\lambda)\hat r_\ell(t-1) + \lambda\,q_\ell(t),\\
\alpha_\ell(t) &= \frac{\exp\!\big(\eta\,\hat r_\ell(t)\big)}{\sum_{k}\exp\!\big(\eta\,\hat r_k(t)\big)}.
\label{eq:hedge}
\end{align}
This places weight on layers whose failures most predict system harm, so the composite trust becomes most sensitive precisely where the system is breaking. Treating each layer as an ``expert'' whose instantaneous loss is $-\hat r_\ell(t)$, \eqref{eq:hedge} is the Hedge algorithm and inherits its guarantee.

\begin{lemma}[No-regret layer tracking]
\label{lem:regret}
With learning rate $\eta=\sqrt{8\ln L/T}$, the weighting~\eqref{eq:hedge} achieves regret
$R_T \le \sqrt{(T/2)\ln L}$
relative to the best fixed layer-weighting in hindsight over a horizon $T$.
\end{lemma}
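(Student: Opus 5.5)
The plan is to reduce the lemma to the textbook Hedge regret bound by making the expert losses in \eqref{eq:hedge} explicit and then tuning $\eta$.

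\textbf{Step 1: cast the update as Hedge.} Each layer $\ell$ is an expert. Define its per-step loss as $g_\ell(t):=1-\hat r_\ell(t)$. Since $q_\ell(t)\in[0,1]$ and the recursion for $\hat r_\ell$ is a convex combination, induction gives $\hat r_\ell(t)\in[0,1]$ (given $\hat r_\ell(0)\in[0,1]$), so $g_\ell(t)\in[0,1]$. Subtracting the constant $1$ from every expert's loss leaves softmax weights unchanged. Hence $\alpha_\ell(t)\propto\exp(\eta\hat r_\ell(t))$ coincides with Hedge weights $\propto\exp(-\eta g_\ell)$. The paper's phrase ``loss $-\hat r_\ell$'' is this same identification up to the shift that puts losses in $[0,1]$.

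\textbf{Step 2: match the cumulative form of Hedge.} Standard Hedge weights by exponentiated \emph{cumulative} loss, $\exp(-\eta\sum_{s<t}g_\ell(s))$, whereas \eqref{eq:hedge} exponentiates the current EWMA value. To reconcile the two, I would treat the EWMA $\hat r_\ell$ as the adversary's instantaneous reward signal and rewrite the weighting as Hedge on the increments. Concretely, define per-round losses $g_\ell(t)$ so that the partial sums $\sum_{s\le t}g_\ell(s)$ reproduce $-\hat r_\ell(t)$ up to an $\ell$-independent shift. Then \eqref{eq:hedge} is Hedge with respect to $\{g_\ell(t)\}$, and the regret is measured against the best fixed layer.

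\textbf{Step 3: the potential argument.} Let $W_t=\sum_\ell e^{-\eta G_\ell(t)}$, where $G_\ell$ is the cumulative loss. Then:
\begin{enumerate}[leftmargin=1.3em,topsep=2pt,itemsep=2pt]
\item The lower bound is $\ln(W_T/W_0)\ge-\eta\min_\ell G_\ell(T)-\ln L$.
\item For the upper bound, Hoeffding's lemma applied to a $[0,1]$-valued variable gives $\ln(W_{t+1}/W_t)\le-\eta\langle\boldsymbol\alpha(t),g(t)\rangle+\eta^2/8$.
\item Summing over rounds and combining the two bounds yields $R_T\le\ln L/\eta+\eta T/8$.
\item Substituting $\eta=\sqrt{8\ln L/T}$ gives $R_T\le\sqrt{(T/2)\ln L}$, the stated bound.
\end{enumerate}

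\textbf{Main obstacle.} Step 2 is where I expect trouble. If the per-round losses are read directly as $1-\hat r_\ell(t)$, the increments needed in Step 2 may not lie in a unit-range interval, since they are differences of EWMA values. Hoeffding's lemma requires bounded range, so the $\eta^2/8$ term could then degrade by a constant factor. My first attempt would use the fact that EWMA increments are bounded by $\lambda$, since $|\hat r_\ell(t)-\hat r_\ell(t-1)|\le\lambda$, so each round's loss vector has range at most $\lambda\le1$. That makes Hoeffding's lemma apply with range $\le1$, and the stated constant survives, possibly even improved. If this fails, I would state the lemma for the cumulative-loss variant of Hedge with losses $1-q_\ell(t)$, and treat the EWMA as a smoothing step that does not affect the regret constant.
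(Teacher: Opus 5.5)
Your Step~2 catches exactly what the paper's one-line citation glosses over. The paper identifies \eqref{eq:hedge} with Hedge on instantaneous losses $-\hat r_\ell(t)$, but \eqref{eq:hedge} exponentiates the \emph{current} EWMA value rather than a cumulative loss, so the textbook bound cannot simply be invoked. Your increment repair, $g_\ell(t)=\hat r_\ell(t-1)-\hat r_\ell(t)$, does make the rule literally Hedge. This needs equal $\hat r_\ell(0)$, so that $W_0=L$, and an index shift, since $\alpha(t)$ already contains $q(t)$ and is the Hedge weight for round $t+1$. It still does not deliver the lemma, for two reasons.

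First, the range claim is off by a factor of two. You have $g_\ell(t)=\lambda\bigl(\hat r_\ell(t-1)-q_\ell(t)\bigr)\in[-\lambda,\lambda]$, and the spread across layers can approach $2\lambda$: take one layer with $\hat r_\ell(t-1)\approx1$, $q_\ell(t)=0$ and another with $\hat r_k(t-1)\approx0$, $q_k(t)=1$. Hoeffding then gives $\eta^2\lambda^2/2$ per round. With the prescribed $\eta$ the bound becomes $\sqrt{T\ln L/8}+\lambda^2\sqrt{2T\ln L}$, which gives the stated constant only for $\lambda\le1/2$.

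Second, and decisively, these losses telescope: $\sum_{t\le T}g_\ell(t)=\hat r_\ell(0)-\hat r_\ell(T)\in[-1,1]$. The ``best fixed layer in hindsight'' is then just the layer whose EWMA rose most between times $0$ and $T$. The inequality you would prove is therefore not the regret you set up in Step~1, and it says nothing about accumulated layer-attributed harm.

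For a loss that does measure harm, the bound is false for \eqref{eq:hedge} as written, so no argument can establish it without changing the rule or the notion of regret. Because $\hat r_\ell\in[0,1]$, every ratio $\alpha_\ell(t)/\alpha_k(t)$ lies in $[e^{-\eta},e^{\eta}]$. Since $\eta=\sqrt{8\ln L/T}\to0$, the weights stay near uniform.

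Concretely, take $\hat r_\ell(0)=0$ for all $\ell$, $q_1\equiv1$, and $q_\ell\equiv0$ for $\ell\ge2$. Under losses $1-q_\ell(t)$ (or $1-\hat r_\ell(t)$), the learner pays at least $1-\alpha_1(t)\ge(L-1)/(e^{\eta}+L-1)$ per round. Layer~$1$ has cumulative loss $0$ (respectively, less than $1/\lambda$). Hence $R_T\ge T(L-1)/(e^{\eta}+L-1)-1/\lambda=\Theta(T)$. For $L=5$, $T=10^4$ this is about $0.79\,T$, against a claimed bound of roughly $90$.

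The same example disposes of your fallback. The EWMA is not a smoothing step that leaves the regret constant intact; it is precisely what pins the weights near uniform. What your Step~3 does prove verbatim is the bound for Hedge on cumulative attributed harm, $\alpha_\ell(t)\propto\exp\bigl(\eta\sum_{s<t}q_\ell(s)\bigr)$, with losses $1-q_\ell(t)$. A guarantee for the discounted rule as written would need a different kind of bound from the textbook one the paper cites.
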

\begin{proof}
Immediate from the standard regret bound of the Hedge/multiplicative-weights algorithm with bounded losses in $[0,1]$~\cite{freund1997decision,arora2012mw}.
\end{proof}

Thus $\boldsymbol{\alpha}(t)$ converges, in the no-regret sense, to emphasizing the empirically dominant failure layer, linking the system's failure record directly to its trust computation rather than fixing weights a priori.

\section{Trust-Gated Capability Control}
\label{sec:tgcc}
\begin{figure}[t]
\centerline{\includegraphics[width=\columnwidth]{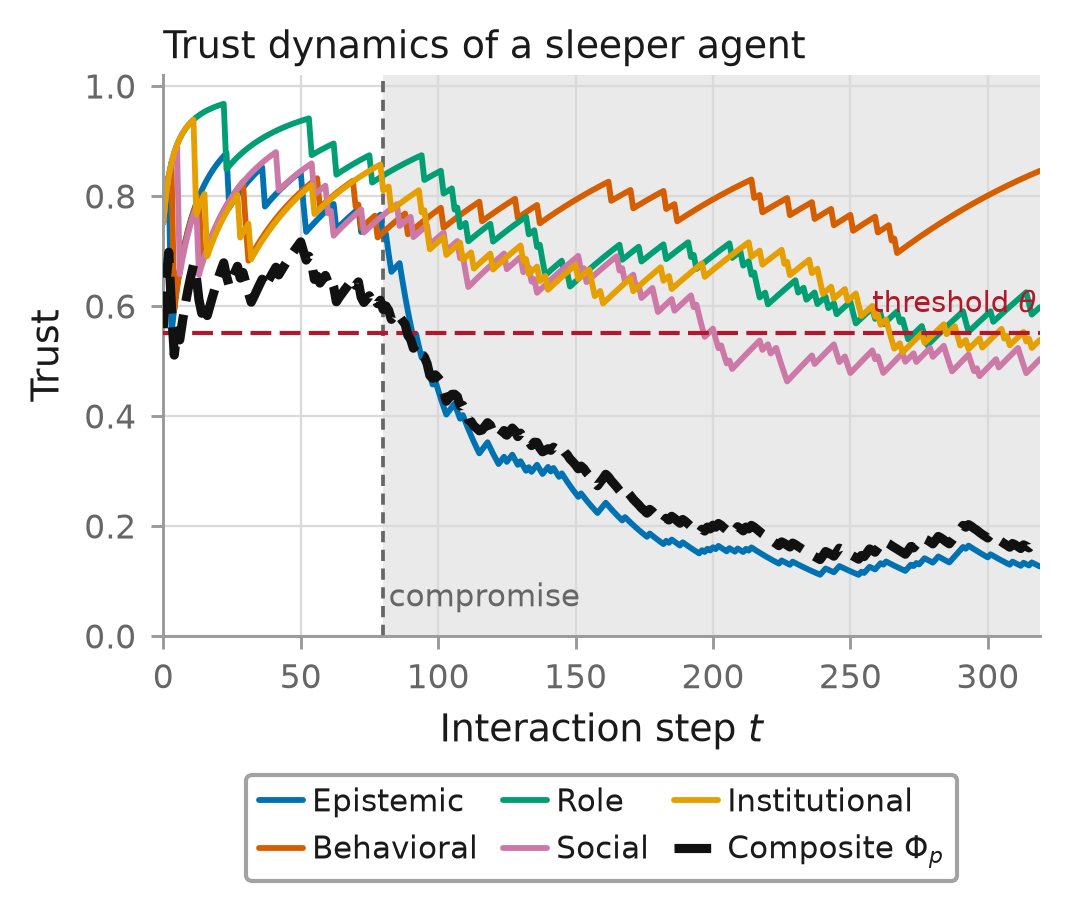}}
\caption{Trust dynamics of a sleeper agent. Behavioral trust stays high; epistemic collapse propagates through the synergy operator and pushes composite trust $\Phi_p$ below the revocation threshold $\theta$ (self-revocation, Corollary~\ref{cor:revoke}).}
\label{fig:traj}
\end{figure}
We now actuate composite trust. Let $\mathcal{C}$ be the set of capabilities (tool calls, message scopes, write actions). Each $c\in\mathcal{C}$ carries a risk threshold $\theta_c\in[0,1]$ and a set of \emph{required prerequisite layers} $\mathrm{req}(c)$ with per-layer thresholds $\theta_{c,\ell}$.

\begin{definition}[Trust-gated grant]
Agent $i$ is granted capability $c$ at step $t$ iff
\begin{equation}
g_i(c,t)=\mathbb{1}\!\Big[\Phi_p(\mathbf{T}_i(t)) \ge \theta_c \;\wedge\; \forall \ell\!\in\!\mathrm{req}(c)\!:\, \tilde T_{i,\ell}(t)\ge\theta_{c,\ell}\Big].
\label{eq:grant}
\end{equation}
\end{definition}

Grants are recomputed every step, so a capability token is short-lived and automatically revoked when trust degrades, realizing least-privilege and revocability. Algorithm~\ref{alg:tgcc} gives the runtime loop.

\begin{algorithm}[t]
\caption{TGCC runtime loop (per agent $i$, per step $t$)}
\label{alg:tgcc}
\begin{algorithmic}[1]
\REQUIRE coupling $\mathbf{C}$, order $p$, thresholds $\{\theta_c,\theta_{c,\ell}\}$, rates $\lambda,\eta$
\STATE observe layer checks $s_1,\dots,s_L$; update Beta beliefs via \eqref{eq:beta}
\STATE compute effective trust $\tilde{\mathbf{T}}$ via synergy operator \eqref{eq:synergy}
\STATE update risk $\hat r_\ell$ and weights $\boldsymbol\alpha$ via \eqref{eq:hedge}
\STATE composite $\Phi_p \leftarrow (\sum_\ell \alpha_\ell \tilde T_\ell^{\,p})^{1/p}$
\FORALL{capabilities $c\in\mathcal{C}$}
  \STATE issue/keep token for $c$ iff grant predicate \eqref{eq:grant} holds; else revoke
\ENDFOR
\IF{$\min_\ell \tilde T_\ell < \theta_{\mathrm{rec}}$}
  \STATE trigger layer-specific recovery (e.g.\ external knowledge anchoring)
\ENDIF
\end{algorithmic}
\end{algorithm}

Every quantity is maintained by a recursive update, so the controller is cheap enough to run on every interaction.

\begin{proposition}[Constant per-step overhead]
\label{prop:complexity}
Each iteration of Algorithm~\ref{alg:tgcc} runs in $O(L^2+|\mathcal{C}|)$ time and $O(L)$ memory per agent, independent of the interaction history length $t$.
\end{proposition}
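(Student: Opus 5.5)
The plan is to walk through Algorithm~\ref{alg:tgcc} line by line, charge each line a time cost and a persistent-state cost, and then sum. Throughout we treat $L$, $|\mathcal{C}|$ and the sizes of the sets $\mathrm{req}(c)$ as fixed by the deployment. We also use the unit-cost model for arithmetic, so $\exp$ and real powers cost $O(1)$.

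\textbf{State.} The persistent state per agent is the pairs $(a_\ell,b_\ell)$ from~\eqref{eq:beta}, the risk estimates $\hat r_\ell$, and the weights $\alpha_\ell$. This is $O(L)$ numbers. The coupling matrix $\mathbf{C}$ and the thresholds are global parameters shared by all agents, so they are not charged per agent.

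\textbf{Line-by-line cost.}
\begin{itemize}
\item Line~1 performs $L$ constant-time recursive updates from~\eqref{eq:beta}. The means from~\eqref{eq:betamean} are one division each, so the line costs $O(L)$.
\item Line~2 evaluates~\eqref{eq:synergy}. Layer $\ell$ needs a product over $k<\ell$, i.e.\ $\ell-1$ factors, giving $\sum_\ell(\ell-1)=L(L-1)/2=O(L^2)$ operations in total. Only $L$ effective values $\tilde T_\ell$ are stored as temporaries.
\item Line~3 updates each $\hat r_\ell$ exponentially in $O(1)$. The softmax~\eqref{eq:hedge} needs one pass for the normalizer and one pass to divide, so $O(L)$.
\item Line~4 is a single sum of $L$ powers followed by one root, so $O(L)$. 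For $p<0$ a zero $\tilde T_\ell$ must be handled: if any $\tilde T_\ell=0$, we set $\Phi_p=0$, consistent with Lemma~\ref{lem:wellposed} and Theorem~\ref{thm:bounds}.
\item Lines~5--7 evaluate the grant predicate~\eqref{eq:grant} for each capability. Each evaluation is one comparison against $\theta_c$ plus $|\mathrm{req}(c)|\le L$ comparisons, giving $O(|\mathcal{C}|)$ when the requirement sets have bounded size. Token issue or revoke is one bit per capability.
\item Lines~8--10 need one $O(L)$ minimum.
\end{itemize}
Summing gives $O(L^2+|\mathcal{C}|)$ time.

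\textbf{Independence from $t$.} Every update reads only the current state and the current signals $s_\ell,q_\ell$; nothing iterates over past steps. Forgetting is folded into $a_\ell$, $b_\ell$ and $\hat r_\ell$ multiplicatively. So no quantity depends on the history length, and this holds by induction on $t$.

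\textbf{Main obstacle.} The difficulty is bookkeeping, not mathematics. The grant loop is really $O(\sum_c|\mathrm{req}(c)|)$, which is $O(|\mathcal{C}|L)$ in the worst case. I would state explicitly that $|\mathrm{req}(c)|$ is treated as bounded by a constant, or else absorb the extra factor into the bound. The token bits also take $O(|\mathcal{C}|)$ memory, which I would attribute to the controller rather than to the agent's trust state; this is what gives the $O(L)$ per-agent memory claim.
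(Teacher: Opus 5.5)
Your proposal is correct and follows essentially the same accounting as the paper's proof: $O(L)$ for the Beta and Hedge updates with $O(L)$ stored state, $O(L^2)$ for the synergy products, $O(L)$ for the composite, $O(|\mathcal{C}|)$ for the grant test under bounded prerequisite sets, and history independence because the forgetting recursions summarize the past. Your extra remarks are sound refinements that the paper leaves implicit: you charge $\mathbf{C}$ and the thresholds to shared global state rather than to each agent, note the $O(|\mathcal{C}|L)$ worst case when $|\mathrm{req}(c)|$ is unbounded, and adopt the convention $\Phi_p=0$ when some $\tilde T_\ell=0$.
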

\begin{proof}
The Beta updates~\eqref{eq:beta} and the Hedge update~\eqref{eq:hedge} touch each layer once ($O(L)$) and store only $2L$ pseudo-counts and $L$ risk estimates. The synergy operator~\eqref{eq:synergy} forms $L$ products of at most $L$ factors ($O(L^2)$), and the composite~\eqref{eq:phi} is $O(L)$. The grant test~\eqref{eq:grant} costs $O(|\mathcal{C}|)$ for bounded prerequisite sets. History-independence follows because the forgetting recursions summarize the entire past in the $O(L)$ state.
\end{proof}

With $L=5$ this overhead is negligible, so per-step re-evaluation and prompt revocation are practical even at large collective scale.

\subsection{Breaking the Trust-Vulnerability Paradox}
A naive controller authorizes on a single pairwise/behavioral trust signal $\tau\approx T_2$ (the behavioral layer). Under such gating, an adversary that keeps behaving agreeably retains privilege even while its epistemic reliability collapses. This is precisely the regime in which over-exposure grows with trust~\cite{qi2025trust}. TGCC removes this attack surface.

\begin{theorem}[Paradox resolution]
\label{thm:paradox}
Consider a stealthy compromise in which an agent's behavioral trust is driven arbitrarily high, $T_2\to1$, while a prerequisite layer $k$ degrades to $T_k\le\delta$. Under naive behavioral gating, every capability with $\theta_c\le T_2$ remains granted, so over-exposure is unbounded in the compromise depth. Under TGCC with order $p$ and $k$ a prerequisite of all layers (e.g.\ epistemic), every capability with
\begin{equation}
\theta_c > \kappa\,\delta,\qquad \kappa=\bigl(\min_\ell\alpha_\ell\bigr)^{1/p},
\label{eq:revoke}
\end{equation}
is revoked irrespective of $T_2$. Hence the granted high-risk set is non-increasing in the compromise depth $(1-\delta)$.
\end{theorem}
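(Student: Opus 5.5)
The argument has two halves, matching the two regimes in the statement, and the heart of it is Proposition~\ref{prop:cascade} combined with the grant predicate~\eqref{eq:grant}.

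\textbf{Naive behavioral gating.} The controller grants $c$ iff $\tau\approx T_2\ge\theta_c$. As $T_2\to1$, every capability with $\theta_c\le T_2$ stays granted. In the limit this is every capability with $\theta_c<1$, and $T_k$ never enters the decision. So the granted set does not shrink as the compromise depth $(1-\delta)$ grows; if anything it grows with $T_2$. This is the sense of ``unbounded over-exposure'': nothing in the granted set depends on $\delta$. This half is essentially definitional.

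\textbf{TGCC.} Here I would bound $\Phi_p$ uniformly in $T_2$.

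\emph{Case $k=1$.} Proposition~\ref{prop:cascade} gives $\min_\ell\tilde T_\ell\le\tilde T_1=T_1\le\delta$. Theorem~\ref{thm:bounds}, via~\eqref{eq:thm1kappa}, then yields $\Phi_p\le\kappa\,\delta$.

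\emph{General prerequisite $k$.} We have $\tilde T_k\le T_k\le\delta$, since every synergy factor lies in $[0,1]$ by Lemma~\ref{lem:wellposed}. Hence again $\min_\ell\tilde T_\ell\le\delta$ and $\Phi_p\le\kappa\delta$.

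The key observation is that this bound involves $T_2$ only through $\tilde T_2\le1$. It therefore holds for every value of $T_2$, including $T_2\to1$, and for every weight vector $\boldsymbol\alpha$ on the simplex (only $\kappa$ depends on $\boldsymbol\alpha$). So whenever $\theta_c>\kappa\delta$, the first conjunct of~\eqref{eq:grant} fails and $g_i(c,t)=0$. Because grants are recomputed every step (Algorithm~\ref{alg:tgcc}), any previously issued token for such a $c$ is revoked at that step.

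\textbf{Monotonicity in compromise depth.} The granted high-risk set is contained in $\{c:\theta_c\le\kappa\delta\}$. Increasing the depth $(1-\delta)$ means decreasing $\delta$, which shrinks the threshold $\kappa\delta$ and hence this superset, so the set is nested and non-increasing.

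\textbf{Main obstacle.} The subtle point is that $\kappa=(\min_\ell\alpha_\ell)^{1/p}$ depends on the adaptive weights $\boldsymbol\alpha(t)$, which themselves move with the compromise. For the ``non-increasing'' claim I must either fix $\boldsymbol\alpha$ or argue that $\kappa$ does not grow enough to undo the decrease in $\delta$. I would handle this by noting that the Hedge weights~\eqref{eq:hedge} have full support, with $\alpha_\ell\ge e^{-\eta}/L$ since $\hat r_\ell\in[0,1]$. This gives the uniform cap
\[
\kappa\le\bar\kappa:=\bigl(e^{-\eta}/L\bigr)^{1/p},
\]
and I would state monotonicity for the conservative threshold $\bar\kappa\delta$, which is independent of the trajectory. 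If a sharper statement is wanted, one can note that when the epistemic layer is failing, the Hedge update raises $\alpha_1$. One can then apply the single-term argument from the proof of Theorem~\ref{thm:bounds} with $j=1$ directly, getting $\Phi_p\le\alpha_1^{1/p}\delta$, where the factor $\alpha_1^{1/p}$ decreases as attention shifts to layer $1$. This only strengthens revocation.
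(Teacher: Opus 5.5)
Your proposal is correct and follows the paper's own route: Proposition~\ref{prop:cascade} together with the $\kappa$-bound~\eqref{eq:thm1kappa} gives $\Phi_p\le\kappa\delta$ independently of $T_2$, so the first conjunct of~\eqref{eq:grant} fails whenever $\theta_c>\kappa\delta$, and monotonicity comes from the shrinking threshold $\kappa\delta$, which is the same bound-level reading of the ``granted set'' that the paper uses. Your extras are valid refinements rather than a different argument: the general-$k$ step $\tilde T_k\le T_k$, and the Hedge floor $\alpha_\ell\ge e^{-\eta}/L$, which yields a trajectory-independent $\bar\kappa$ (plus the sharper bound $\alpha_1^{1/p}\delta$); these handle the dependence of $\kappa$ on $\boldsymbol\alpha(t)$, which the paper's proof silently treats as fixed.
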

\begin{proof}
By Proposition~\ref{prop:cascade} the compromise gives $\Phi_p\le\kappa\delta$. If $\theta_c>\kappa\delta$ then $\Phi_p<\theta_c$, so the first conjunct of~\eqref{eq:grant} fails and $c$ is denied regardless of $T_2$. Decreasing $\delta$ (a deeper compromise) lowers the bound $\kappa\delta$, weakly enlarging the revoked set, so the granted high-risk set is non-increasing in $(1-\delta)$. The naive case is immediate, its grant rule depending only on $T_2$.
\end{proof}

Combining the conservative fixed point (Proposition~\ref{prop:fixedpoint}) with Theorem~\ref{thm:paradox} bounds how quickly a compromise is contained.

\begin{corollary}[Self-revocation with bounded latency]
\label{cor:revoke}
Suppose at time $t_0$ a prerequisite layer $k$ (prerequisite to all layers) suffers a reliability drop whose conservative fixed point satisfies $T_k^\star\le\delta$. Then $T_k(t)\to T_k^\star$ geometrically at rate $\gamma$, and once $T_k(t)\le\delta$, TGCC denies every capability with $\theta_c>\kappa\delta$ regardless of behavioral trust. The number of steps until $T_k$ reaches an $\epsilon$-neighborhood of $T_k^\star$ is $O\!\big(\log_{1/\gamma}(1/\epsilon)\big)$.
\end{corollary}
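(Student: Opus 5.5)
The plan is to chain Proposition~\ref{prop:fixedpoint} with Theorem~\ref{thm:paradox} and then turn the geometric contraction into an explicit step count. First, treat the post-drop regime $t\ge t_0$ as a fresh stationary problem with reliability $\rho_k$. The evidence pair $(a_k(t_0),b_k(t_0))$ inherited from before the drop serves as an arbitrary initial condition. Working with the expected-evidence recursion exactly as in the proof of Proposition~\ref{prop:fixedpoint}, we obtain
\begin{equation*}
\bar a_{t}-\bar a^\star=\gamma^{t-t_0}(\bar a_{t_0}-\bar a^\star),\qquad \bar b_{t}-\bar b^\star=\gamma^{t-t_0}(\bar b_{t_0}-\bar b^\star).
\end{equation*}
So both pseudo-counts converge geometrically at rate $\gamma$, with no dependence on the pre-drop history beyond a constant.

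Next, we transfer this to $T_k=\bar a/(\bar a+\bar b)$. The map $(a,b)\mapsto a/(a+b)$ is Lipschitz on the region $a+b\ge m$, with constant of order $1/m$. The sum $\bar a_t+\bar b_t$ itself follows $s_{t+1}=\gamma s_t+\rho_k+\omega(1-\rho_k)$, so it is bounded below by $\min\{s_{t_0},\,s^\star\}>0$ for all $t\ge t_0$. Hence
\begin{equation*}
|T_k(t)-T_k^\star|\le M\gamma^{t-t_0}
\end{equation*}
for a constant $M$ depending only on the initial state, $\rho_k$ and $\omega$. Requiring $M\gamma^{t-t_0}\le\epsilon$ gives $t-t_0\ge\log_{1/\gamma}(M/\epsilon)=O(\log_{1/\gamma}(1/\epsilon))$, which is the stated latency.

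For the gating claim, I would argue directly rather than through the limit. Suppose that at some step $T_k(t)\le\delta$. Then Proposition~\ref{prop:cascade} applies, since $k$ is a prerequisite of all layers. This uses $\tilde T_k\le T_k\le\delta$, so $\min_\ell\tilde T_\ell\le\delta$; it does not need $k=1$, only that the minimum is at most $\delta$. Theorem~\ref{thm:bounds} then gives $\Phi_p\le\kappa\delta$. The argument of Theorem~\ref{thm:paradox} shows that every $c$ with $\theta_c>\kappa\delta$ fails the first conjunct of \eqref{eq:grant}, whatever the value of $T_2$. Because grants are recomputed every step, this is an actual revocation at that step.

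The main obstacle is the gap between ``$T_k^\star\le\delta$'' and ``$T_k(t)\le\delta$ after finitely many steps.'' If $T_k^\star=\delta$ exactly, geometric convergence from above never reaches $\delta$ in finite time. I would state the hitting-time bound under strict inequality, $T_k^\star<\delta$, and take $\epsilon=\delta-T_k^\star$. Alternatively, I would phrase the corollary as denying capabilities with $\theta_c>\kappa(\delta+\epsilon)$ after $O(\log_{1/\gamma}(1/\epsilon))$ steps.

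A second caveat is that Proposition~\ref{prop:fixedpoint} controls only the \emph{expected} evidence, and $T_k$ is a ratio of random quantities. I would either state the result for the expected-evidence trajectory, consistent with that proposition, or add a concentration remark. In the latter case, the forgetting window has effective size $1/(1-\gamma)$, so fluctuations are $O(\sqrt{1-\gamma})$ and can be absorbed into the $\epsilon$ margin.
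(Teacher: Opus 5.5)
Your proposal is correct and follows the paper's route: the paper also chains the geometric contraction of Proposition~\ref{prop:fixedpoint} into a logarithmic hitting time, then invokes Theorem~\ref{thm:paradox} once $T_k(t)\le\delta$. Your additional steps fill in details the paper's two-line proof leaves implicit, and they are sound: the Lipschitz transfer from the pseudo-counts to the ratio $a/(a+b)$ on $\{a+b\ge m\}$, the observation that $\tilde T_k\le T_k\le\delta$ suffices for any $k$, and your caveats that $\delta$ is never reached in finite time when $T_k^\star=\delta$ and that the result holds for the expected-evidence trajectory rather than the random $T_k(t)$.
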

\begin{proof}
Geometric convergence at rate $\gamma$ follows from Proposition~\ref{prop:fixedpoint}; the threshold-crossing time is then logarithmic in the target accuracy. Once $T_k\le\delta$, Theorem~\ref{thm:paradox} applies.
\end{proof}

\begin{insightbox}
\textit{\textbf{Key insight.} Naive gating makes authorization increasing in raw trust; TGCC makes it increasing in the \emph{weakest cross-layer-verified} trust, which an adversary cannot inflate by behaving agreeably -- turning the paradox's monotone exposure growth into bounded, self-revoking exposure with logarithmic detection latency.}
\end{insightbox}

\section{Numerical Illustration}
\label{sec:eval}
We validate the framework in a controlled simulation of the layer-trust dynamics, instantiating a single agent under a \emph{sleeper}-agent threat as arises in long-running collectives such as Agent Hospital~\cite{li2025agenthospitalsimulacrumhospital}, MetaGPT~\cite{hong2308metagpt}, and GPTSwarm~\cite{zhuge2024gptswarm}. Each layer emits Bernoulli evidence from a latent reliability $\rho_\ell$; beliefs update via~\eqref{eq:beta} ($\gamma=0.985$, $\omega=3$). Before activation at $t=80$ all layers are reliable ($\rho\approx0.9$); after activation, epistemic reliability collapses to $\rho_1=0.30$ (confident hallucination) while behavioral reliability stays at $0.93$ (agent remains agreeable), matching the exact stealth profile of Theorem~\ref{thm:paradox}. We set epistemic as a prerequisite of all layers, $p=-6$, and revocation threshold $\theta=0.55$, averaged over $400$ seeds.

\textbf{Mechanism (Fig.~\ref{fig:traj}).} Behavioral trust stays high after activation, yet epistemic trust collapses and, through the synergy operator~\eqref{eq:synergy}, drags composite $\Phi_p$ below $\theta$, so TGCC revokes the high-risk capability while a behavioral-only gate never would -- Theorem~\ref{thm:paradox} made visible.

\textbf{Validation of the theory.} Three checks (Table~\ref{tab:results}). (i)~\emph{Theorem~\ref{thm:bounds}}: across $20{,}000$ random draws of $(\mathbf{T},\boldsymbol{\alpha})$, zero violations of the bounds~\eqref{eq:thm1bounds}. (ii)~\emph{Proposition~\ref{prop:fixedpoint}}: observed steady-state trust matches the closed-form fixed point $T^\star_\ell$ to a mean absolute error of $0.007$ across layers (Fig.~\ref{fig:eval}(a)). (iii)~\emph{Corollary~\ref{cor:revoke}}: after compromise, the capability is revoked within $3.1$ steps on average (median $2$), consistent with the predicted logarithmic latency. Fig.~\ref{fig:eval}(b) shows Hedge weights shifting toward the epistemic layer once it becomes the dominant failure source, as Lemma~\ref{lem:regret} anticipates.

\begin{table}[t]
\caption{Predictions vs.\ simulation (400 seeds).}
\label{tab:results}
\centering
\footnotesize
\renewcommand{\arraystretch}{1.15}
\begin{tabular}{p{3.2cm}cc}
\toprule
\rowcolor{headrow}
\textbf{Result} & \textbf{Prediction} & \textbf{Observed} \\
\midrule
Thm.~\ref{thm:bounds} bound violations & $0$ & $0/20{,}000$ \\
Prop.~\ref{prop:fixedpoint} fixed point (MAE) & exact & $0.007$ \\
Cor.~\ref{cor:revoke} revocation latency & $O(\log_{1/\gamma}\!\tfrac{1}{\epsilon})$ & $3.1$ steps \\
Breaks TVP (Thm.~\ref{thm:paradox}) & yes & yes \\
\bottomrule
\end{tabular}
\end{table}

\section{Deployment and Extensions}
\label{sec:discussion}
\textbf{Protocol mapping.} TGCC is directly implementable over emerging agent protocols. In MCP~\cite{mcp2024} and A2A~\cite{a2a2025}, capabilities are exposed as scoped, token-mediated grants, but the policy that sets scope is left unspecified~\cite{interop2025}. Equation~\eqref{eq:grant} is exactly such a policy: an OAuth-style scope is issued only while $\Phi_p$ and the required prerequisite layers clear the capability's thresholds, with short-lived tokens giving the per-call revocation zero-trust guidance recommends~\cite{zerotrust2025,nist800207}.

\textbf{Cross-modal epistemic trust.} The framework is modality-agnostic: for embodied or perceptual agents, the epistemic layer's evidence $s_1$ can be sensor-fusion confidence or perceptual anomaly scores, so adversarial perturbations degrade $T_1$ and, via~\eqref{eq:synergy}, gate downstream actuation -- extending the stack beyond dialogue agents.

\textbf{Economic-incentive layer.} In agentic marketplaces, misaligned incentives (e.g.\ tacit collusion) are not captured by reliability alone. An incentive layer between role and social, tracking deviation from incentive-compatible behavior, slots into~\eqref{eq:synergy} without changing the operator and composes with governance-graph mechanisms~\cite{instai2026}.

\section{Conclusion}
We turned a descriptive layered trust model into an operational controller for multi-agent LLM systems. A cross-layer synergy operator and generalized-mean composite give a principled, bounded way to combine layer trusts; a no-regret estimator grounds layer importance in observed failures; and Trust-Gated Capability Control converts the result into revocable, least-privilege authorization. We prove, and numerically confirm, that this combination breaks the Trust-Vulnerability Paradox: privilege tracks the weakest verified layer, not raw trust, so a stealthy compromise that inflates behavioral trust while degrading a prerequisite layer self-revokes within a logarithmic number of steps rather than escalating. It is implementable over current agent protocols and extends to perceptual and economic settings, offering a concrete path to safe, scalable multi-agent collaboration.

\end{document}